\documentclass{amsart}

\usepackage{graphicx}
\usepackage{latexsym,color}
\usepackage{amssymb}
\usepackage{amsfonts}
\usepackage{amsmath}
\usepackage{algorithm} 
\usepackage[noend]{algpseudocode} 
\usepackage{hyperref}
\usepackage{comment}

\usepackage{float}
\usepackage{pict2e}

\usepackage{mathrsfs}
\usepackage[numbers]{natbib}

\numberwithin{equation}{section}
\newtheorem{thm}{Theorem}[section]

\newtheorem{rem}[thm]{Remark}
\newtheorem{cor}[thm]{Corollary}

\newtheorem{exm}[thm]{Example}

\newcommand{\be}{\begin{equation}}
\newcommand{\ee}{\end{equation}}

\newcommand{\delay}{{D}} 
\newcommand{\hurst}{{\mathcal{H}}} 
\newcommand{\band}{{\mathcal{S}}}  
\newcommand{\antiband}{{\mathcal{T}}} 
\newcommand{\cov}{{\Sigma}} 
\newcommand{\pre}{{\Lambda}} 

\begin{document}
\title{Exponential Utility Maximization in a Discrete Time Gaussian Framework}
 
\author{Yan Dolinsky}
\address{Department of Statistics, Hebrew University of Jerusalem, Israel}
\email{yan.dolinsky@mail.huji.ac.il}

\author{Or Zuk}
\address{Department of Statistics, Hebrew University of Jerusalem, Israel}
\email{or.zuk@mail.huji.ac.il}
\thanks{Y. Dolinsky is supported in part by the GIF Grant 1489-304.6/2019 and the ISF grant 230/21. \\ \hspace*{0.3cm} O. Zuk is supported in part by the ISF grant
2392/22.}
\date{\today}
\maketitle
\begin{abstract}
The aim of this short note is to present a solution 
to the discrete time exponential 
utility maximization problem in a case where the underlying 
asset has a multivariate normal distribution. 
In addition to the usual setting considered in Mathematical Finance, we also consider an investor who is informed about the 
risky asset's price changes with a delay $\delay$. Our method of solution is based on
 the theory developed in \cite{BF:81} and guessing the optimal portfolio. 
\end{abstract}
\begin{description}
\item[Mathematical Subject Classification (2010)] 91B16, 91G10
\item[Keywords] Utility Maximization, Hedging with Delay, Banded Matrices
\end{description}

\keywords{}
 \maketitle \markboth{}{}
\renewcommand{\theequation}{\arabic{section}.\arabic{equation}}
\pagenumbering{arabic}

\section{Introduction and the Main Result}\label{sec:1}\setcounter{equation}{0}

Taking into account frictions is an important challenge in financial modelling.
In this note, we focus on the friction arising from the fact
that investment decisions may be based only on delayed information,
and the actual present market price is unknown at the time of decision making. 
This corresponds to the case where there is a time
delay in receiving market information (or in applying it), which causes the trader’s
filtration to be delayed with respect to the price filtration.
 
We start by briefly reviewing some of the relevant literature. In \cite{S:94} the author solved the mean-variance hedging problem with partial information for European contingent claims in a setup
where the asset price process is a martingale. Subsequently, the study of the mean-variance hedging problem was extended beyond the martingale setup
(see for instance \cite{F:00,MTT:08}).
The work \cite{NO:09} applied Malliavin calculus to the utility maximization problem,
and in \cite{SZ:19} the authors introduce and study a general framework of optimal stochastic control with delayed information.
Another related work is \cite{BD:2021} 
which deals with scaling limits of utility indifference prices with
vanishing delay and risk aversion inversely proportional to the delay.

To formulate our main result, let $(\Omega,\mathcal F,\mathbb P)$ be a complete 
probability space carrying one risky asset which we denote by $S=(S_k)_{0\leq k\leq n}$ where 
$n\in\mathbb N$ is a fixed finite time horizon. We assume that the investor has a bank account that,
for simplicity, bears no interest. We assume that the initial stock price $S_0$ is a known constant 
and the random vector $(S_1,...,S_n)$ has 
a non-degenerate
multivariate normal distribution. Hence, 
\begin{equation*}
(S_1-S_0,S_2-S_1,...,S_n-S_{n-1})\sim \mathcal N(\mu,\cov)
\end{equation*} 
where $\mu=(\mu_1,...,\mu_n)\in\mathbb R^n$ is a constant (row) vector and 
$\cov\in M_n(\mathbb R)$ (as usual, $M_n(\mathbb R)$ denotes the set of all $n\times n$ real matrices) is a constant positive definite matrix. Let $\pre:=\cov^{-1}$ be the precision matrix. 

We fix a nonnegative integer number $\delay\in\mathbb Z_{+}$ and consider a situation where
there is a delay of $\delay$ trading times. Hence, the investor's flow of information is given by the filtration 
$\mathcal G^{\delay}_k:=\sigma\left\{S_0,...,S_{\left(k-\delay\right)^{+}}\right\}$, $k=0,1,...,n$. 
The case $\delay=0$ means that there is no delay and corresponds 
to the usual setting. 
 
A trading strategy is a random vector $\gamma=(\gamma_1,...,\gamma_n)$ which is predictable with respect to the above filtration. Namely,
for any $k$, $\gamma_k$ is a $\mathcal G^{\delay}_{k-1}$ measurable random variable ($\gamma_k$ is the number of shares at time $k-1$).
Denote by $\mathcal A_{\delay}$ the set of all trading strategies. 
For $\gamma\in\mathcal A_{\delay}$ the corresponding portfolio value
at the maturity date is given by 
$$
V^{\gamma}_n=\sum_{i=1}^{n} \gamma_i (S_i-S_{i-1}).
$$
The investor’s preferences are described by an exponential utility function 
$$u(x)=-\exp(-\alpha x), \ \ x\in\mathbb R,$$
 with absolute risk aversion parameter $\alpha>0$, and her goal is to
\begin{equation}\label{2.2}
\mbox{Maximize} \ \ \mathbb E_{\mathbb P}\left[-\exp\left(-\alpha V^{\gamma}_n\right)\right] \ \ \mbox{over} \ \ \gamma\in\mathcal A_{\delay}
\end{equation}
where $\mathbb E_{\mathbb P}$ denotes the expectation with respect to the 
market probability measure $\mathbb P$. Clearly, 
for any portfolio strategy 
$\gamma$ and a constant $\lambda\in\mathbb R$, 
$V^{\lambda\gamma}_n=\lambda V^{\gamma}_n$. 
Thus, without loss of generality, we take the risk aversion $\alpha=1$.

Next, we introduce some notations. Let $\band_{\delay}\subset M_n(\mathbb R)$ be the set 
of all positive definite matrices $Q\in M_n(\mathbb R)$ that satisfy $Q_{ij}=0$ for $|i-j|>\delay$. Namely, 
$\band_{\delay}$ is the set of all banded positive definite matrices with lower bandwidth and upper bandwidth equal to $\delay$. Let $\antiband_{\delay}\subset M_n(\mathbb R)$ be the set of all symmetric matrices $\Gamma\in M_n(\mathbb R)$ that satisfy $\Gamma_{ij}=0$ for $|i-j|\leq \delay.$ 

We arrive at the main result of the paper.
\begin{thm}\label{thm2.1}
There exists a unique decomposition 
\begin{equation}\label{2.3}
\pre=\hat Q^{-1}+\hat\Gamma
\end{equation}
where $\hat Q\in \band_{\delay}$ and $\hat\Gamma\in \antiband_{\delay}$.
The maximizer $\hat\gamma=(\hat\gamma_1,...,\hat\gamma_n)$ for the optimization problem (\ref{2.2}) is unique and is given by the linear form
\begin{equation}\label{2.4}
\hat\gamma_i=\sum_{j=1}^n \pre_{ij} \mu_j -\sum_{j=1}^{i-1} \hat{\Gamma}_{ij} (S_j-S_{j-1}), \ \ i=1,...,n.
\end{equation}
The corresponding value is given by 
\begin{equation}\label{2.5}
\mathbb E_{\mathbb P}\left[-\exp\left(- V^{\hat\gamma}_n\right)\right]=-\sqrt{\frac{|\hat Q|}{|\Sigma|}}\exp\left(-\frac{1}{2}\mu \pre\mu'\right)
\end{equation}
where $'$ denotes transposition and $|\cdot|$ is the determinant of the matrix $\cdot$.
\end{thm}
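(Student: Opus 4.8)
The plan is to first settle the algebraic decomposition (\ref{2.3}) and then \emph{verify} the candidate (\ref{2.4}) by a single Gaussian computation together with a convexity argument; the guessing of the portfolio is reverse-engineered from the demand that the reweighted law be centered Gaussian. Throughout write $X_i:=S_i-S_{i-1}$, so $X:=(X_1,\dots,X_n)\sim\mathcal N(\mu,\cov)$ and $\mathcal G^{\delay}_{i-1}=\sigma(X_1,\dots,X_{(i-1-\delay)^+})$. Since $\hat\Gamma\in\antiband_{\delay}$ vanishes on the band, (\ref{2.3}) is equivalent to producing a positive definite $\hat Q\in\band_{\delay}$ whose inverse agrees with $\pre$ on the band $\{|i-j|\le\delay\}$, and then setting $\hat\Gamma:=\pre-\hat Q^{-1}$. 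This is precisely the band-extension/maximum-entropy statement underlying \cite{BF:81}: I would obtain $\hat Q^{-1}$ as the maximizer of $R\mapsto\log|R|$ over the set of positive definite $R$ with $R_{ij}=\pre_{ij}$ for $|i-j|\le\delay$. The set is nonempty ($\pre$ itself is feasible), and bounded once the diagonal is fixed; $\log|\cdot|$ is strictly concave and tends to $-\infty$ at the boundary of the cone, so the maximizer exists and is unique, and its first-order (Lagrange) conditions read $(R^{-1})_{ij}=0$ for $|i-j|>\delay$, i.e. $\hat Q:=R^{-1}\in\band_{\delay}$. Strict concavity gives uniqueness of the decomposition.

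Granting the decomposition, note that $\hat\gamma$ in (\ref{2.4}) is admissible: since $\hat\Gamma_{ij}=0$ for $|i-j|\le\delay$, the sum in (\ref{2.4}) only involves $X_j$ with $j\le i-1-\delay$, so $\hat\gamma_i$ is $\mathcal G^{\delay}_{i-1}$-measurable. The central computation is to insert (\ref{2.4}) into the exponent against the Gaussian density $f(x)\propto\exp(-\tfrac12(x-\mu)\pre(x-\mu)')$. Using symmetry of $\hat\Gamma$ and $\hat\Gamma_{ii}=0$ one gets $V^{\hat\gamma}_n=\mu\pre x'-\tfrac12 x\hat\Gamma x'$, whence
\[
-V^{\hat\gamma}_n-\tfrac12(x-\mu)\pre(x-\mu)'=-\tfrac12\,x(\pre-\hat\Gamma)x'-\tfrac12\mu\pre\mu'=-\tfrac12\,x\hat Q^{-1}x'-\tfrac12\mu\pre\mu',
\]
the $\mu$-cross terms cancelling by design. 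Hence $e^{-V^{\hat\gamma}_n}f(x)=\sqrt{|\hat Q|/|\cov|}\,e^{-\frac12\mu\pre\mu'}\,g(x)$, where $g$ is the centered $\mathcal N(0,\hat Q)$ density; integrating in $x$ (using that $\hat Q$ is positive definite) yields (\ref{2.5}) immediately.

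For optimality I would use the tangent-line bound for the convex map $t\mapsto e^{-t}$: for any admissible $\gamma$, with $\eta:=\gamma-\hat\gamma\in\mathcal A_{\delay}$,
\[
e^{-V^{\gamma}_n}\ge e^{-V^{\hat\gamma}_n}-e^{-V^{\hat\gamma}_n}\,V^{\eta}_n,\qquad V^{\eta}_n=\sum_{i=1}^n\eta_iX_i.
\]
Taking $\mathbb E_{\mathbb P}$ and using that each $\eta_i$ is $\mathcal G^{\delay}_{i-1}$-measurable, the claim reduces to the orthogonality relations
\[
\mathbb E_{\mathbb P}\!\left[X_i\,e^{-V^{\hat\gamma}_n}\,\big|\,\mathcal G^{\delay}_{i-1}\right]=0,\qquad i=1,\dots,n,
\]
since then $\mathbb E_{\mathbb P}[e^{-V^{\hat\gamma}_n}V^{\eta}_n]=\sum_i\mathbb E_{\mathbb P}[\eta_i\,\mathbb E_{\mathbb P}[X_ie^{-V^{\hat\gamma}_n}\mid\mathcal G^{\delay}_{i-1}]]=0$. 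By the identity $e^{-V^{\hat\gamma}_n}f=\mathrm{const}\cdot g$ above, reweighting $\mathbb P$ by $e^{-V^{\hat\gamma}_n}$ produces $\mathcal N(0,\hat Q)$, so (by Bayes) these relations are equivalent to vanishing of the $\mathcal N(0,\hat Q)$-conditional mean $\mathbb E[X_i\mid X_1,\dots,X_m]$ with $m=(i-1-\delay)^+$; this mean is a linear combination of $X_1,\dots,X_m$ with coefficients determined by the entries $\hat Q_{ij}$, $1\le j\le m$. This is exactly where bandedness enters: for $j\le m$ one has $i-j\ge\delay+1$, hence $\hat Q_{ij}=0$ and the conditional mean vanishes. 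Summing the tangent bound gives $\mathbb E_{\mathbb P}[e^{-V^{\gamma}_n}]\ge\mathbb E_{\mathbb P}[e^{-V^{\hat\gamma}_n}]$, so (\ref{2.2}) is maximized by $\hat\gamma$.

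Uniqueness of the maximizer follows from strict convexity: equality in the tangent bound forces $V^{\eta}_n=0$ a.s.; letting $\ell$ be the largest index with $\mathbb P(\eta_\ell\ne0)>0$ and conditioning on $\sigma(X_1,\dots,X_{\ell-1})$, the relation $\eta_\ell X_\ell=-\sum_{i<\ell}\eta_iX_i$ has a $\sigma(X_1,\dots,X_{\ell-1})$-measurable right-hand side while $X_\ell$ has strictly positive conditional variance (non-degeneracy of $\cov$), forcing $\eta_\ell=0$ and hence $\eta\equiv0$. The main obstacle is the decomposition (\ref{2.3}): the optimality verification is an essentially mechanical Gaussian computation once the banded $\hat Q$ is in hand, whereas producing it, and proving it unique, is the genuinely nontrivial input supplied by the band-extension theory of \cite{BF:81}. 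The only other points needing care are the integrability of the cross term $\mathbb E_{\mathbb P}[e^{-V^{\hat\gamma}_n}V^{\eta}_n]$ (handled by the Gaussian reweighting, after restricting to $\gamma$ with finite objective) and the measurability bookkeeping behind admissibility.
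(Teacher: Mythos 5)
Your proposal is correct in substance, and it reaches the paper's result through noticeably different wrapping around the same central insight, namely that reweighting the Gaussian law by $e^{-V^{\hat\gamma}_n}$ produces a centered $\mathcal N(0,\hat Q)$ law, and that bandedness of $\hat Q$ then kills the relevant conditional means. The differences are twofold. For the decomposition (\ref{2.3}), the paper checks positivity of principal minors via Sylvester's criterion and then invokes Theorem 5.5 of \cite{BF:81}; you instead re-derive the statement as the covariance-selection/maximum-entropy problem (maximize $\log|R|$ over positive definite matrices agreeing with $\pre$ on the band), with the Lagrange conditions forcing $R^{-1}$ to be banded and strict concavity giving uniqueness. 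This is self-contained and arguably more transparent, provided you make explicit that \emph{any} decomposition $\pre=Q^{-1}+\Gamma$ with $Q\in\band_{\delay}$, $\Gamma\in\antiband_{\delay}$ yields a feasible point satisfying the first-order conditions -- that is what converts uniqueness of the maximizer into uniqueness of the decomposition. For optimality, the paper goes through duality: it introduces the class $\mathcal Q_{\delay}$ of equivalent finite-entropy measures under which increments have zero conditional mean given delayed information, proves the bound $\log\mathbb E_{\mathbb P}[e^{-V^{\gamma}_n}]\geq-\mathbb E_{\mathbb Q}[\log(d\mathbb Q/d\mathbb P)]$ via the Fenchel inequality $xy\leq e^x+y(\log y-1)$, and checks that $d\hat{\mathbb Q}/d\mathbb P=e^{C-V^{\hat\gamma}_n}$ defines an element of $\mathcal Q_{\delay}$. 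You work on the primal side with the tangent-line bound and first-order orthogonality conditions; these are dual formulations of the same verification, and yours avoids introducing $\mathcal Q_{\delay}$ altogether, while the paper's buys a general verification theorem identifying the value as a relative entropy. Your uniqueness argument (backward induction on the largest index with $\eta_\ell\neq 0$, using positive conditional variance) is in fact more explicit than the paper's appeal to strict concavity.

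The one genuinely thin spot is the cross term $\mathbb E_{\mathbb P}\left[e^{-V^{\hat\gamma}_n}V^{\eta}_n\right]$. Since no integrability is imposed on trading strategies, $\eta_i$ need not be $\hat{\mathbb Q}$-integrable, so the interchange $\mathbb E_{\mathbb P}\left[\eta_iX_ie^{-V^{\hat\gamma}_n}\right]=\mathbb E_{\mathbb P}\left[\eta_i\,\mathbb E_{\mathbb P}\left[X_ie^{-V^{\hat\gamma}_n}\mid\mathcal G^{\delay}_{i-1}\right]\right]$ is not automatic: the left-hand side may fail to be defined, and almost-sure vanishing of conditional expectations does not by itself yield a vanishing (or even nonpositive) total expectation. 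Your parenthetical ``handled by the Gaussian reweighting'' does not resolve this; one needs a truncation/localization argument, or the observation that finiteness of $\mathbb E_{\mathbb P}[e^{-V^{\gamma}_n}]$ together with the tangent bound makes the negative part of the cross term integrable, followed by an induction over the time index. This is exactly the point the paper disposes of by citing Lemma 2.1 of \cite{BD:2021}, which gives $\mathbb E_{\mathbb Q}[V^{\gamma}_n]=0$ for $\mathbb Q\in\mathcal Q_{\delay}$ whenever $\mathbb E_{\mathbb P}[e^{-V^{\gamma}_n}]<\infty$. Your argument needs the same lemma (or a proof of it) in the same place; with that inserted, your proof is complete.
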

The proof of Theorem \ref{thm2.1} is presented
in the next section.

\begin{rem}
There are two trivial cases where the optimal trading strategy is deterministic. The first one is the case of no information $\delay=n-1$. 
The second case is 
where $\cov$ is a diagonal matrix, i.e. the stock increments $S_1-S_0,...,S_n-S_{n-1}$ are independent. 
In both of these cases we have $\hat Q=\Sigma$, $\hat \Gamma=0$ and the value is $\exp\left(-\frac{1}{2}\mu \pre\mu'\right)$. In particular, if 
the stock increments $S_1-S_0,...,S_n-S_{n-1}$ are independent 
the optimal (deterministic) strategy does not depend on $\delay$ and given by
$\hat\gamma_i=\frac{\mu_i}{\Sigma_{ii}}$, $i=1,...,n$. 
\end{rem}
For the case $\delay\in\{0,1\}$
the decomposition (\ref{2.3}) is explicit provided that the inverse matrix $\pre$ is known.
\begin{cor}\label{cor1}
${}$\\
(I) For $\delay=0$ (no delay) the optimal strategy is given by 
$$
\hat\gamma_i=\sum_{j=1}^n \pre_{ij} \mu_j -\sum_{j=1}^{i-1} \pre_{ij} (S_j-S_{j-1}), \ \ i=1,...,n.
$$
The value is 
$$\mathbb E_{\mathbb P}\left[-\exp\left(- V^{\hat\gamma}_n\right)\right]=-\frac{1}{\left(|\Sigma|\prod\limits_{i=1}^n \pre_{ii} \right)^{1/2}}\exp\left(-\frac{1}{2}\mu \pre\mu'\right)$$
(II) For $\delay=1$ the optimal strategy is given by 
$$
\hat\gamma_i=\sum_{j=1}^n \pre_{ij} \mu_j -\sum_{j=1}^{i-2} 
\left(\pre_{ij}-\frac{\prod\limits_{k=j}^{i-1} \pre_{k k+1} }{\prod\limits_{k=j+1}^{i-1} \pre_{k k}}\right)
(S_j-S_{j-1}), \ \ i=1,...,n. 
$$
The value is
\begin{eqnarray*}
&\mathbb E_{\mathbb P}\left[-\exp\left(- V^{\hat\gamma}_n\right)\right]
&=-\left(\frac{\prod\limits_{i=2}^{n-1}\pre_{ii}}{|\Sigma|\prod\limits_{i=1}^{n-1}\left(\pre_{ii}\pre_{i+1i+1}-\pre^2_{ii+1}\right) }\right)^{1/2}\exp\left(-\frac{1}{2}\mu \pre\mu'\right).
\end{eqnarray*}
\end{cor}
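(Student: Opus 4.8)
The plan is to read off the decomposition \eqref{2.3} explicitly in the two special cases and then substitute into the formulas \eqref{2.4}--\eqref{2.5} of Theorem \ref{thm2.1}; the only real content is making $\hat Q$ and $\hat\Gamma$ concrete functions of the entries of $\pre$. For $\delay=0$ this is immediate: $\band_0$ is the set of positive definite diagonal matrices and $\antiband_0$ the set of symmetric matrices with vanishing diagonal. Since $\hat Q$ is diagonal, so is $\hat Q^{-1}$, and comparing \eqref{2.3} entrywise forces $(\hat Q^{-1})_{ii}=\pre_{ii}$ and $\hat\Gamma_{ij}=\pre_{ij}$ for $i\neq j$. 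Hence $\hat Q_{ii}=1/\pre_{ii}$; substituting $\hat\Gamma_{ij}=\pre_{ij}$ into \eqref{2.4} gives the stated strategy, and $|\hat Q|=\prod_{i=1}^n\pre_{ii}^{-1}$ in \eqref{2.5} gives the stated value.

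For $\delay=1$, $\band_1$ consists of positive definite tridiagonal matrices and $\antiband_1$ of symmetric matrices vanishing on the three central diagonals, so \eqref{2.3} requires $\hat Q^{-1}$ to agree with $\pre$ on the tridiagonal band, i.e. $(\hat Q^{-1})_{ii}=\pre_{ii}$ and $(\hat Q^{-1})_{i,i+1}=\pre_{i,i+1}$, with $\hat\Gamma$ absorbing the rest. The structural fact I would invoke is that the inverse of a symmetric tridiagonal matrix is a Green's matrix: there exist sequences $u,v$ with $(\hat Q^{-1})_{ij}=u_{\min(i,j)}v_{\max(i,j)}$. The two band conditions read $u_iv_i=\pre_{ii}$ and $u_iv_{i+1}=\pre_{i,i+1}$, whence $v_{i+1}/v_i=\pre_{i,i+1}/\pre_{ii}$, and telescoping yields, for $j<i$,
\[
(\hat Q^{-1})_{ij}=\pre_{jj}\prod_{k=j}^{i-1}\frac{\pre_{k,k+1}}{\pre_{kk}}=\frac{\prod_{k=j}^{i-1}\pre_{k,k+1}}{\prod_{k=j+1}^{i-1}\pre_{kk}}.
\]
Since $\hat\Gamma_{ij}=\pre_{ij}-(\hat Q^{-1})_{ij}$ for $j\le i-2$, substituting this into \eqref{2.4} reproduces exactly the claimed strategy. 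Note that we only ever divide by diagonal entries $\pre_{kk}>0$, so the expression is well defined.

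It remains to evaluate $|\hat Q|$ for \eqref{2.5}. Write $D_k$ and $\tilde D_k$ for the leading and trailing $k\times k$ principal minors of $\hat Q$ (with $D_0=\tilde D_0=1$). The classical cofactor formulas for a tridiagonal matrix give $(\hat Q^{-1})_{ii}=D_{i-1}\tilde D_{n-i}/D_n$ and $(\hat Q^{-1})_{i,i+1}=-\hat Q_{i,i+1}D_{i-1}\tilde D_{n-i-1}/D_n$. Combining these with the ``cut'' determinant identity $D_n=D_i\tilde D_{n-i}-\hat Q_{i,i+1}^2\,D_{i-1}\tilde D_{n-i-1}$ (expansion of the determinant across the gap between indices $i$ and $i+1$) yields
\[
\pre_{ii}\pre_{i+1,i+1}-\pre_{i,i+1}^2=(\hat Q^{-1})_{ii}(\hat Q^{-1})_{i+1,i+1}-(\hat Q^{-1})_{i,i+1}^2=\frac{D_{i-1}\tilde D_{n-i-1}}{D_n}.
\]
Taking the product over $i=1,\dots,n-1$, and separately the product of $\pre_{ii}=D_{i-1}\tilde D_{n-i}/D_n$ over $i=2,\dots,n-1$, almost all minors telescope, with only the boundary terms $D_0=\tilde D_0=1$ surviving, so the ratio collapses to $D_n=|\hat Q|$. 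This gives $|\hat Q|=\prod_{i=2}^{n-1}\pre_{ii}\big/\prod_{i=1}^{n-1}(\pre_{ii}\pre_{i+1,i+1}-\pre_{i,i+1}^2)$, which substituted into \eqref{2.5} produces the stated value.

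The main obstacle is purely organizational: correctly tracking the index ranges in the two telescoping products and verifying the cofactor and cut identities with the right signs. The semiseparable representation of the inverse and the cut identity are the two tools that make this bookkeeping tractable; once they are in place the computation is mechanical.
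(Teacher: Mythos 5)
Your proposal is correct, and it follows the paper's overall template: make the decomposition \eqref{2.3} explicit in each case and then substitute into \eqref{2.4}--\eqref{2.5}. Part (I) coincides with the paper's argument verbatim. The difference is in part (II): where the paper simply cites Barrett \cite{B:1979} --- equation (2.2) there for the product formula expressing the off-band entries $[\hat Q^{-1}]_{ij}$, $i>j+1$, through the band entries, and equation (3.1) there for the determinant of the inverse of a tridiagonal matrix --- you prove both facts from scratch: the product formula via the Gantmacher--Krein (Green's matrix, semiseparable) representation of the inverse of a symmetric tridiagonal matrix, and the determinant identity via the cofactor formulas plus the cut identity $D_n=D_i\tilde D_{n-i}-\hat Q_{i,i+1}^2 D_{i-1}\tilde D_{n-i-1}$, with a telescoping product. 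Both computations check out: the ratio does collapse to $D_n=|\hat Q|$, and the resulting $|\hat Q|$ substituted into \eqref{2.5} gives exactly the stated value. What your route buys is a self-contained proof that explains where Barrett's identities come from; what it costs is one caveat you should flag: the representation $(\hat Q^{-1})_{ij}=u_{\min(i,j)}v_{\max(i,j)}$ is valid only when $\hat Q$ is irreducible, i.e.\ all $\hat Q_{i,i+1}\neq 0$, which (by the cofactor formula for $(\hat Q^{-1})_{i,i+1}$ and positive definiteness) is equivalent to all $\pre_{i,i+1}\neq 0$. If some $\pre_{i,i+1}=0$, then $\hat Q$ is block diagonal and no global Green's representation exists (it would force a vanishing diagonal entry of $\hat Q^{-1}$); the corollary still holds in that case --- the product $\prod_{k=j}^{i-1}\pre_{k,k+1}$ vanishes exactly when $i$ and $j$ straddle a block boundary, and the determinant formula factors over the blocks --- but your argument must then be run blockwise. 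Barrett's triangle property, as invoked by the paper, requires only that the diagonal entries of $\hat Q^{-1}$ be nonzero (automatic from positive definiteness), so the paper's citation covers this degenerate case with no extra work.
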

\begin{proof}
${}$\\
(I) Observe that for $\delay=0$ we have the following properties: 
The matrix $\hat Q$ is diagonal with $\hat Q^{-1}_{ii}=\pre_{ii}$ for all $i$;
The matrix $\hat\Gamma$ is given by $\hat\Gamma_{ij}=\pre_{ij}$ for $i\neq j$ and $\hat\Gamma_{ij}=0$ otherwise. 
This together with (\ref{2.4})--(\ref{2.5}) completes the proof. \\
(II) The matrix $\hat Q$ satisfies $\hat Q_{ij}=0$ for $|i-j|>1$ (i.e. $\hat Q$ is tridiagonal). Hence, from 
(2.2) in \cite{B:1979}
$$
[\hat Q^{-1}]_{ij}=[\hat Q^{-1}]_{ji}=\frac{\prod\limits_{k=j}^{i-1} [\hat Q^{-1}]_{k k+1} }{\prod\limits_{k=j+1}^{i-1} [\hat Q^{-1}]_{k k} } 
\ \ \mbox{if} \ \ i>j+1
$$
and from (3.1) in \cite{B:1979}
$$
|\hat Q^{-1}|=\frac{\prod\limits_{i=1}^{n-1}\left([\hat Q^{-1}]_{ii}[\hat Q^{-1}]_{i+1i+1}-[\hat Q^{-1}]^2_{ii+1}\right)}{\prod\limits_{i=2}^{n-1}[\hat Q^{-1}]_{ii}}.
$$

Since $\hat\Gamma\in\antiband_1$ then
\begin{eqnarray*}
[\hat Q^{-1}]_{ij}=\pre_{ij} \ \ \mbox{if} \ \ |i-j|\leq 1. 
\end{eqnarray*} 
By combining the above equalities with (\ref{2.4})--(\ref{2.5}) we complete the proof. 
\end{proof}
\begin{rem}
Let us emphasize that we do not assume any Markov structure for the underlying asset, hence even for the ``usual'' case $\delay=0$ Theorem \ref{thm2.1}
gives an efficient way to solve a non-Markovian optimization problem.
\end{rem}

We end this section with the following example. 
\begin{exm}\label{exm.1}
Consider a simple model where the increments have zero mean (i.e. $\mu=0$) and the covariance matrix is
the Kac-Murdock-Szeg\"{o} matrix
$$
\Sigma_{ij}:=\rho^{|i-j|} \ \ \forall i,j
$$
for some constant $\rho\in(0,1)$. Assume that $n\geq 3$.

The matrix $\Sigma$ has a determinant equal to 
$|\Sigma|=\left(1-\rho^2\right)^{n-1}$, 
and a simple tridiagonal inverse that is given by 
(see \cite {D:2003})
\[ \pre_{ij}=
\begin{cases}
 0, &|i-j|>1 \\
 -\frac{\rho}{ 1-\rho^2}, & |i-j|=1 \\
\frac{1+\rho^2}{ 1-\rho^2} , &1<i=j<n \\
 \frac{1}{ 1-\rho^2}, & \mbox{otherwise}.
\end{cases}
\]
${}$\\
\textbf{Case I: $\delay=0$.}
From Corollary \ref{cor1} we obtain that the optimal strategy is
$\hat\gamma_1=0$, and $\hat\gamma_i=\frac{\rho}{1-\rho^2} (S_{i-1}-S_{i-2})$ for $i>1$. 
The value is 
$-\sqrt{\frac{1-\rho^2}{(1+\rho^2)^{n-2}}}
$.
\\
\textbf{Case II: $\delay=1$.}
 Corollary \ref{cor1} gives the optimal strategy 
 $$\hat\gamma_1=\hat\gamma_2=0 \ \ \mbox{and} \ \ \ \hat\gamma_i=\frac{1+\rho^2}{1-\rho^2}\sum_{j=1}^{i-2}\left(-\frac{\rho}{1+\rho^2}\right)^{i-j} (S_j-S_{j-1}) \ \ \mbox{for} \ \ i>2.$$
 The value is $-\sqrt{\frac{\left(1-\rho^2\right)\left(1+\rho^2\right)^{n-2}}{\left(1+\rho^2+\rho^4\right)^{n-3}}}.$
 \end{exm}

\section{Proof of Theorem \ref{thm2.1}}\label{sec:2}
In this section we prove Theorem \ref{thm2.1}.
\begin{proof}
The proof will be done in three steps. \\
${}$\\
 \textbf{Step I:}
 In this step we provide an existence and uniqueness for the decomposition (\ref{2.3}). 
We apply the results from \cite{BF:81} which perfectly fit our purposes. 
First we introduce some notations. Let $C \in M_n(\mathbb R)$ and $I,J \in \{1,..,n\}^k$ be two integer vectors of length $k \leq n$ denoting row and column indices, satisfying $I = (i_1,..,i_k), J=(j_1,..,j_k)$ 
with $1 \leq i_1 < i_2 < .. < i_k \leq n$, $1 \leq j_1 < j_2 < .. < j_k \leq n$. 
We denote by $C_J^I$ the minor specified by these indices, i.e. the determinant of the $k \times k$ sub-matrix determined by taking the elements $C_{ij}$ for $i=i_1,..i_k ; j=j_1,..j_k$. For any two natural numbers $b \geq a$ we denote the vector $(a,a+1,...,b)$ by $[a:b]$.

Since $\pre$ is positive definite it follows by Sylvester’s criterion (see e.g. \cite{G:1991}) that its principal minors
$\pre^{[k+1:k+\delay]}_{[k+1:k+\delay]}>0$ for all $k\leq n-\delay-1$ and 
$\pre^{[k:k+\delay]}_{[k:k+\delay]}>0$ 
for all $k\leq n-\delay$.
Hence, by applying Theorem 5.5 in \cite{BF:81} for $r=s=\delay+1$ we obtain that there is a unique invertible matrix $R$ such that $R^{-1}\in\band_{\delay}$ and 
$\pre-R\in \antiband_{\delay}$. Set $\hat Q:=R^{-1}$, $\hat\Gamma=\pre-R$. 
Since $\cov$ is symmetric then from uniqueness it follows that $R$ is also symmetric and the first step is completed. \\
${}$\\
 \textbf{Step II: }
Denote by $\mathcal Q_{\delay}$ the set of all equivalent probability measures $\mathbb Q\sim\mathbb P$ with finite entropy 
 $\mathbb E_{\mathbb Q}\left[\log\left(\frac{d\mathbb Q}{d\mathbb P}\right)\right]<\infty$ relative to $\mathbb P$ that satisfy 
 \begin{equation}\label{4.2-}
 \mathbb E_{\mathbb Q}[S_t-S_s|\mathcal G^{\delay}_s]=0 \ \ \forall t\geq s.
 \end{equation}
In this step we prove the following verification result:
If a triplet $(\tilde\gamma,\tilde{\mathbb Q},C)\in \mathcal A_{\delay}\times \mathcal Q_{\delay}\times\mathbb R$ satisfies
\begin{equation}\label{4.2-+}
V^{\tilde\gamma}_n+\log\left(\frac{d{\tilde{\mathbb Q}}}{d\mathbb P}\right)=C
\end{equation}
 then $\tilde\gamma\in\mathcal A_{\delay}$ is the unique optimal portfolio for the optimization 
 problem (\ref{2.2}) and the corresponding value is 
 $
\mathbb E_{\mathbb P}\left[-\exp\left(- V^{\tilde\gamma}_n\right)\right]=-e^{-C}.$

Indeed, using similar arguments as in the proof of Lemma 2.1 in \cite{BD:2021}, it follows that for any $\gamma\in\mathcal A_{\delay}$ and 
 $\mathbb Q\in \mathcal Q_{\delay}$ 
 we have 
 $\mathbb E_{\mathbb Q}[V^{\gamma}_n]=0$ provided that 
 $\mathbb E_{\mathbb P}\left[\exp\left(-V^{ \gamma}_n\right)\right]<\infty$.
 From (\ref{4.2-+})
 $\mathbb E_{\mathbb P}\left[\exp\left(-V^{\tilde \gamma}_n\right)\right]=e^{-C}.$ Hence,
 $\mathbb E_{\tilde{\mathbb Q}}[V^{\tilde\gamma}_n]=0$ and 
by applying (\ref{4.2-+}) again 
\begin{equation}\label{4.2}
\log\left(\mathbb E_{\mathbb P}\left[\exp\left(-V^{\tilde \gamma}_n\right)\right] \right)=-C=- \mathbb E_{\tilde{\mathbb Q}}\left[\log
\left(\frac{d\tilde{\mathbb Q}}{d\mathbb P}\right)\right].
\end{equation}

Next, it is well known that for a strictly concave utility maximization problem, the optimizer is unique (if exists). Thus, in view of (\ref{4.2}), in order to complete the proof of the second step it remains to show that in general we have the inequality
\begin{equation}\label{4.2+}
\log\left(\mathbb E_{\mathbb P}\left[\exp\left(-V^{\gamma}_n\right)\right] \right)\geq -\mathbb E_{\mathbb Q}\left[\log\left(\frac{d\mathbb Q}{d\mathbb P}\right)\right] \ \ \forall (\gamma,\mathbb Q)\in\mathcal A_{\delay}\times\mathcal Q_{\delay}.
\end{equation}
Let us establish (\ref{4.2+}).
Without loss of generality assume that $\mathbb E_{\mathbb P}\left[\exp\left(-V^{ \gamma}_n\right)\right]<\infty$ (otherwise (\ref{4.2+}) is trivial). Then
for any $z\in\mathbb R$
\begin{align*}
\mathbb E_{\mathbb P}\left[\exp\left(-V^{ \gamma}_n\right)\right]
&=\mathbb E_{\mathbb P}\left[\exp\left(-V^{ \gamma}_n\right)+z \frac{d{\mathbb Q}}{d\mathbb P}V^{\gamma}_n\right]\\
&\geq \mathbb E_{\mathbb P}\left[z\frac{d{\mathbb Q}}{d\mathbb P}\left(1-\log
\left(z\frac{d{\mathbb Q}}{d\mathbb P}\right)\right)\right]\\
&=z-z\log z-z\mathbb E_{\mathbb Q}\left[\log
\left(\frac{d{\mathbb Q}}{d\mathbb P}\right)\right].
\end{align*}
The first equality is due to $\mathbb E_{{\mathbb Q}}[V^{\gamma}_n]=0$.
The inequality follows from 
the Legendre-Fenchel duality inequality $xy \leq e^x+y(\log y-1)$ for all $x,y \in \mathbb{R}$ by setting $x=-V^{\gamma}_n$ and 
$y=z\frac{d\mathbb Q}{d\mathbb P}$.
The last equality is straightforward. 
From simple calculus it follows that the concave function 
$z\rightarrow z-z\log z-z\mathbb E_{\mathbb Q}\left[\log
\left(\frac{d{\mathbb Q}}{d\mathbb P}\right)\right]$, $z>0$ attains its maximum at $z^{*}:=\exp\left(-\mathbb E_{\mathbb Q}\left[\log
\left(\frac{d{\mathbb Q}}{d\mathbb P}\right)\right]\right)$ and the corresponding maximal value 
is also $z^{*}$. This completes the proof of (\ref{4.2+}).
\\
${}$\\
 \textbf{Step III:}
In view of Step I the portfolio $\hat\gamma$ from (\ref{2.4}) is well defined. 
From the fact that 
$\hat\Gamma_{ij}=0$ for $|i-j|\leq \delay$ we obtain that $\hat\gamma\in\mathcal A_{\delay}$. 
Set 
\begin{equation*}
C:=\frac{1}{2}\left(\log |\Sigma|-\log |\hat Q|+ \mu \pre\mu' \right) 
\end{equation*}
and define the measure $\hat{\mathbb Q}$ by the Radon-Nikodym derivative
\begin{equation}\label{4.3}
\frac{d\hat{\mathbb Q}}{d\mathbb P}:=\exp\left(C-V^{\hat\gamma}_n\right).
\end{equation}
From Step II it follows that in order to complete the proof of Theorem \ref{thm2.1} it remains to establish 
that 
the measure $\hat{\mathbb Q}$ is a probability measure which satisfies $\hat{\mathbb Q}\in\mathcal Q_{\delay}$.

To that end 
define $X=(X_1,...,X_n)$ by $X_i=S_i-S_{i-1}$, $i=1,...,n$. 
From (\ref{2.4}) we have (recall that $\hat\Gamma$ is symmetric)
 $V^{\hat\gamma}_n=\hat\gamma X'=\mu \pre X'-\frac{1}{2} X\hat\Gamma X'$. This together with (\ref{2.3}) and
 (\ref{4.3}) gives 
 $$
\frac{d\hat{\mathbb Q}}{d\mathbb P}=\frac{\frac{\exp\left(-\frac{1}{2}X\hat Q^{-1}X'\right)}{\sqrt {(2\pi)^n |\hat Q|}}}{\frac{\exp\left(-\frac{1}{2}(X-\mu)\pre(X-\mu)'\right)}{\sqrt {(2\pi)^n |\Sigma|}}}.$$
Hence, the relation $\left(X;{\mathbb P}\right) \sim\mathcal N(\mu,\Sigma)$
implies that $\hat{\mathbb Q}$ is a probability measure 
and $\left(X;\hat{\mathbb Q}\right) \sim\mathcal N(0,\hat Q)$.
In particular $\mathbb E_{\hat{\mathbb Q}}\left[ \log\left(\frac{d\hat{\mathbb Q}}{d\mathbb P}\right)\right]<\infty$. 
Finally, from 
the property 
 $\hat Q_{ij}=0$ for $|i-j|>\delay$ we obtain that (under the probability measure $\hat{\mathbb Q}$)
 $X_k$ is independent of $(X_1,...,X_{(k-1-\delay)^{+}})$ for all $k$. This yields
 (\ref{4.2-}) and completes the proof. 
 \end{proof}
 
We end this section with the following remark. 
\begin{rem}
Observe that for any positive definite matrix $Q\in M_n(\mathbb R)$, the probability 
measure $\mathbb Q\sim\mathbb P$ which is given by 
 $$
\frac{d{\mathbb Q}}{d\mathbb P}=\frac{\frac{\exp\left(-\frac{1}{2}X Q^{-1}X'\right)}{\sqrt {(2\pi)^n | Q|}}}{\frac{\exp\left(-\frac{1}{2}(X-\mu)\pre(X-\mu)'\right)}{\sqrt {(2\pi)^n |\Sigma|}}}$$
satisfies 
$\left(X;{\mathbb Q}\right) \sim\mathcal N(0, Q)$. In particular for a diagonal $Q$ the measure $\mathbb Q$ is an equivalent martingale measure, and
so the market is arbitrage-free in the classical sense (no delay).  Clearly, for any $\delay$ the set $\mathcal Q_{\delay}$ contains the set of all equivalent martingale measures. 
\end{rem}
\section{Computational Results}\label{sec:3}\setcounter{equation}{0}
We start this section by providing 
 an explicit algorithm for the computation of $\hat Q$ and $\hat\Gamma$ from (\ref{2.3}). Recall the notations from Sections \ref{sec:1},\ref{sec:2}.
In view of Corollary 3.2 in \cite{BF:81} and the fact that $\hat Q\in \band_{\delay}$, we look for a matrix 
$R:=\hat Q^{-1}$ that satisfies $R_{ij}=\pre_{ij}$ for $i,j<\delay+1$ and
has vanishing super and sub $\delay+1$ minors, i.e. minors of the form 
$R^I_J$ where $I=(i_1,...,i_{\delay+1})$, $J=(j_1,...,j_{\delay+1})$ that satisfy one of the inequalities 
$j_1>i_{\delay+1}-\delay$ (super minors) or $i_1>j_{\delay+1}-\delay$ (sub minors). 
The matrix $R$ is symmetric hence we can restrict ourselves to looking only at super minors. Since $\pre$ is positive definite and $R_{ij}=\pre_{ij}$ for $i,j<\delay+1$, it follows that the principal minors 
$R^{[i+1:i+\delay]}_{[i+1:i+\delay]}>0$ for all $i\leq n-\delay-1$.
Thus, from Theorem 3.4 in \cite{BF:81} we conclude that the matrix $R:=\hat Q^{-1}$ is the unique 
symmetric matrix that satisfies 
$R_{ij}=\pre_{ij}$ for $|i-j|\leq \delay$ and
\begin{equation}\label{3.1}
R^{[i:i+\delay]}_{(i+1,i+2,...,i+\delay,i+m)}=0, \ \ i\in [1:n-\delay-1], \ \ m\in [1+\delay:n-i].
\end{equation}

Observe that it is possible to order the $(\delay+1)$-minors in \eqref{3.1} such that each minor is a function of $(\delay+1)^2-1$ known elements of $R$, and one unknown element, hence providing a linear equation for this unknown element. 
For example, the first minor $R^{[1:1+\delay]}_{[2:2+\delay]}$ is determined by $(R_{ij} ), i=1,..,\delay+1 ; j=2,..,\delay+2$ which are all set to the corresponding elements of $\pre$ because $|i-j|\leq \delay$ except for $R_{1 \delay+2}$. 
For four integers $a \leq i \leq b \leq j$, denote by $[a:b]^{i \to j}$ the vector obtained by omitting the index $i$ from $[a:b]$, and adding in the last coordinate the index $j$, i.e. $[a:b]^{i \to j} = (a,a+1,..,i-1,i+1,..,b, j)$. 
Hence, in the Laplace expansion for the $(\delay+1)$-minor
\begin{equation*}
0 = R^{[1:1+\delay]}_{[2:2+\delay]} = (-1)^{\delay}R_{1 2+\delay}R^{[2:1+\delay]}_{[2:1+\delay]}
+\sum_{j=1}^{\delay} (-1)^{1+j} R_{1 1+j} R^{[2:1+\delay]}_{[2:2+\delay]^{1+j \to 2+\delay}}
\end{equation*}
all the $\delay$-minors are known, and the unknown term $R_{1 2+\delay}$ appears only in the first term. 
It is therefore possible to express $R_{1 2+\delay}$ uniquely as a linear function of the known $(\delay+1)^2-1$ $R_{ij}$ values. 

Once $R_{1 \delay+2}$ is determined, we can express the next $(\delay+1)$-minor $R^{[2:2+\delay]}_{[3:3+\delay]}$ in terms of $(\delay+1)^2-1$ known $R_{ij}$ elements and the unknown element $R_{2 \delay+3}$, and determine it in a similar manner. 
Algorithm \ref{alg:band_decomposition} describes how to proceed and determine all the unknown elements $R_{ij}$ one by one, hence it provides the entire decomposition of $\pre$ into $R + \hat\Gamma$ (recall that $R=\hat Q^{-1}$).
\begin{algorithm} 
	\caption{Banded Matrix Decomposition}
	\begin{algorithmic}[1]
		\Statex {\bf Input:} $\Sigma \in M_n(\mathbb R)$ - a real Positive Definite matrix, $\delay \leq n-1$ - band width.
		\Statex {\bf Output:} Two matrices $R+\hat\Gamma=\cov^{-1}$ such that $R^{-1} \in \band_{\delay}, \hat\Gamma \in \antiband_{\delay}$.
		\State Compute the inverse $\pre=\Sigma^{-1}$ and set $R_{ij} = \pre_{ij}, \forall |i-j|\leq \delay$.
		\For{$m=\delay+1$ to $n-1$} 
		\For{$i=1$ to $n-m$} 
		\State Set $$R_{i+m i}, R_{i i+m} = \frac{\sum\limits_{j=1}^{\delay} (-1)^{j+\delay} R_{i i+j} R_{[i+1:i+\delay]^{i+j \to i+m}}^{[i+1:i+\delay]}}{R_{[i+1:i+\delay]}^{[i+1:i+\delay]}}.$$
		\EndFor	 
		\EndFor	 
		\State Set $\hat\Gamma = \pre-R$.
		\State Output the matrices $R, \hat\Gamma$. 
\end{algorithmic}
\label{alg:band_decomposition}
\end{algorithm}

\begin{rem}
As matrix inversion can be performed in $O\left(n^{2.373}\right)$ floating points operations \cite{W:2011}, the total number of floating points operations for Algorithm \ref{alg:band_decomposition} is $O\left(n^{2.373}\right)$ (for matrix inversion), plus $O\left(n^2 \times \delay^{2.373}\right)$ for computing the minors (see \cite{A:1974}), hence yielding an overall computational complexity of $O\left(n^2 (n^{0.373} + \delay^{2.373})\right)$ floating point operations. 
When $\Sigma$ has a special structure, specialized faster algorithms can be used. For example, if $\Sigma$ is a Toeplitz matrix, inversion can be performed in $O\left(n^2\right)$ floating point operations using Trench's algorithm \cite{T:1964}, hence the overall complexity is reduced to $O\left(n^2 \delay^{2.373}\right)$. 
\end{rem}

Next, we implemented Algorithm \ref{alg:band_decomposition} in $R$ code, freely available at \href{https://github.com/orzuk/BandedDecomposition}{github}.
Using this implementation, we present some numerical results for the simple model in Example \ref{exm.1}, and for a more realistic model of a discretized fractional Brownian motion. 
A fractional Brownian motion (fBm) \cite{M:1968} with Hurst parameter $\hurst \in (0,1)$ is a centered Gaussian process $B^{\hurst}_t$, $t\geq 0$ with the covariance function 
\be
Cov\left(B^\hurst_s,B^\hurst_t\right)=\frac{1}{2}
 \left(t^{2\hurst}+s^{2\hurst}-|t-s|^{2\hurst}\right).
\label{eq:fbm} 
\ee
If $\hurst=\frac{1}{2}$ we recover the standard
Brownian motion. Models of asset prices based on fBm have long attracted the interest of
researchers for their properties of long-range dependence, see for instance, \cite{L:91,W:99}. 
For some explicit computations for portfolio optimization in the fBm framework see \cite{G:19,G:21}.

We consider a discretized version of the fBm framework. Namely, we fix $\hurst\in (0,1)$, a time-resolution $\Delta t > 0$ and a final time $T=n \Delta t$ for some  $n\in\mathbb N$, and consider the fBm market which is active at times 
$0, \Delta t, 2\Delta t, \dots, n\Delta t=T$. The stock price at time $k \Delta t$ is given by 
$S_k=S_0+B^{\hurst}_{k \Delta t}$, $k=0,1,\dots,n$. In this case 
$\mu=0$ and by \eqref{eq:fbm} the covariance matrix is the Toeplitz matrix given by 

\begin{align}
\Sigma_{ij} &= Cov\left(B^{\hurst}_{(i+1)\Delta t}-B^{\hurst}_{i \Delta t}, B^{\hurst}_{(j+1)\Delta t}-B^{\hurst}_{j\Delta t}\right) \nonumber \\
&=\frac{\Delta t^{2\hurst}}{2} \left(|i-j-1|^{2\hurst}+|i-j+1|^{2\hurst}-2|i-j|^{2\hurst}\right). 
\label{eq:fbm_discrete}
\end{align}
By following carefully the proof of Proposition 1.6 in \cite{N:2012} we obtain that $\Sigma$ is positive definite.
Moreover, when $\Delta t$ is changed $\Sigma$ is multiplied by a constant scalar, hence for fixed values of $n$, the delay $\delay$ and the $\hurst$
parameter, the value is invariant to $\Delta t$.

We computed numerically the value as a function of the delay $\delay$ for both models, for $n=64$ and $\mu=0$. The results, displayed in Fig. \ref{fig:value_vs_H}, show the decrease in the value as $\delay$ in increasing for both models. We first examined the Kac-Murdock-Szeg\"{o} covariance matrix $\Sigma_{ij}:=\rho^{|i-j|}$ from Example \ref{exm.1}. As expected, for any delay $\delay$ the value is monotonically increasing with the correlations parameter $\rho$, hence longer delays can be tolerated with only a negligible drop in the value as $\rho$ increases. For example, when there is no delay the value is significantly reduced compared to the maximum $0$ only for $\rho<0.3$, while for a delay of $\delay=4$ a significant drop in the value is observed as soon as $\rho<0.9$. 

Next, we examined the covariance matrix representing the discretized fractional Brownian motion from \eqref{eq:fbm_discrete}.
For high $\hurst$ values ($\hurst>0.5$), the increments are positively correlated hence the value increases with $\delay$ towards the maximum $0$.
For $\hurst=0.5$ we reduce to the independent Brownian motion case, hence the value drops to $-1$ regardless of the delay.
For low $\hurst$ values ($\hurst<0.5$), the increments are negatively correlated hence the value increases, but the increase is non-monotonic in $\hurst$. The reason is that for very low $\hurst$ values the correlation between the increments decays faster to $0$ with their time distance, hence a delay results in almost complete loss of information regarding the current price.

The entire computations for each model at a resolution of $0.01$ of $\rho$ or $\hurst$ (in total computing the value for $100$ parameter values $\times$ $8$ values of $\delay$, i.e. $800$ different $\Sigma$ matrices) took $\sim\!200$ seconds on a standard PC laptop in our implementation.
While the value is invariant to the total time $T$, changing the discretization parameter $n$ may alter the value. Increasing $n$ can be interpreted as either keeping a fixed time horizon $T$ and decreasing $\Delta t$, or as fixing $\Delta t$ and increasing the horizon $T$.
For a fixed delay $\delay$, the value increases towards zero with $n$, this make sense since the continuous time fBm model 
leads to arbitrage (see for instance \cite{R:1997, C:2003}).
  For example, as shown in Figure \ref{fig:value_vs_H}, for $\hurst=0.2$ and $D=1$ the value is roughly $-0.426$. The value increases to $-0.123$ and $-0.007$ for $n=128$ and $256$, respectively. Increasing the delay $\delay_n$ with $n$ while keeping the ratio $\frac{\delay_n}{n}$ fixed amounts to refining the discretization towards the continuous limit with a fixed continuous delay $0 < \delay < T$. The value appears to converge to a negative number for this case (i.e. for $\frac{\delay_n}{n}=\frac{1}{64}$), with $-0.283$ for $n=256$ and $-0.228$ for $n=1024$.
A more thorough theoretical and numeric investigation of the scaling of the value as a function of the problem's parameters for this and other models remains for future work. 

\begin{figure}[!ht]	\includegraphics[width=0.7\columnwidth]{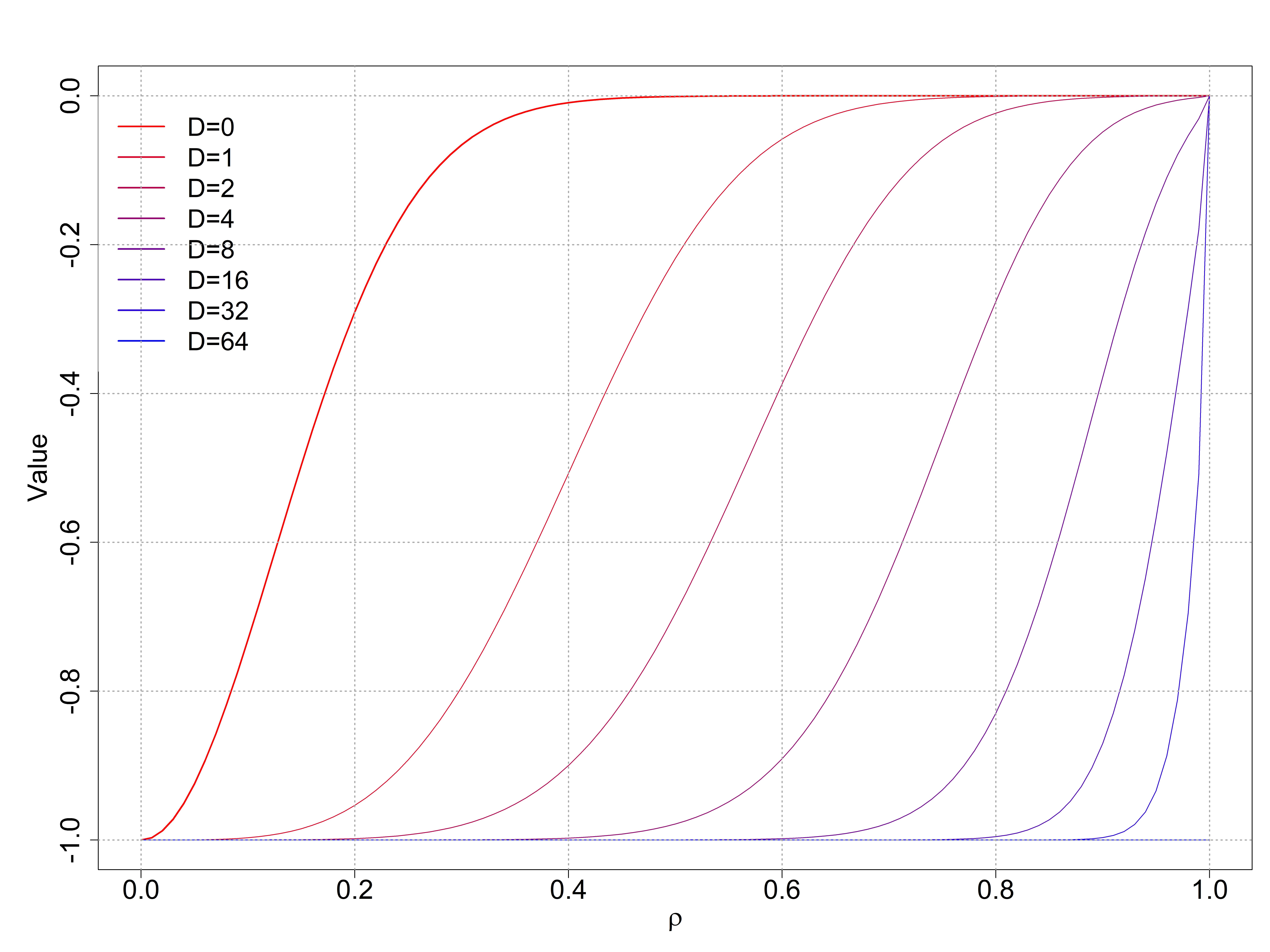} 		\includegraphics[width=0.7\columnwidth]{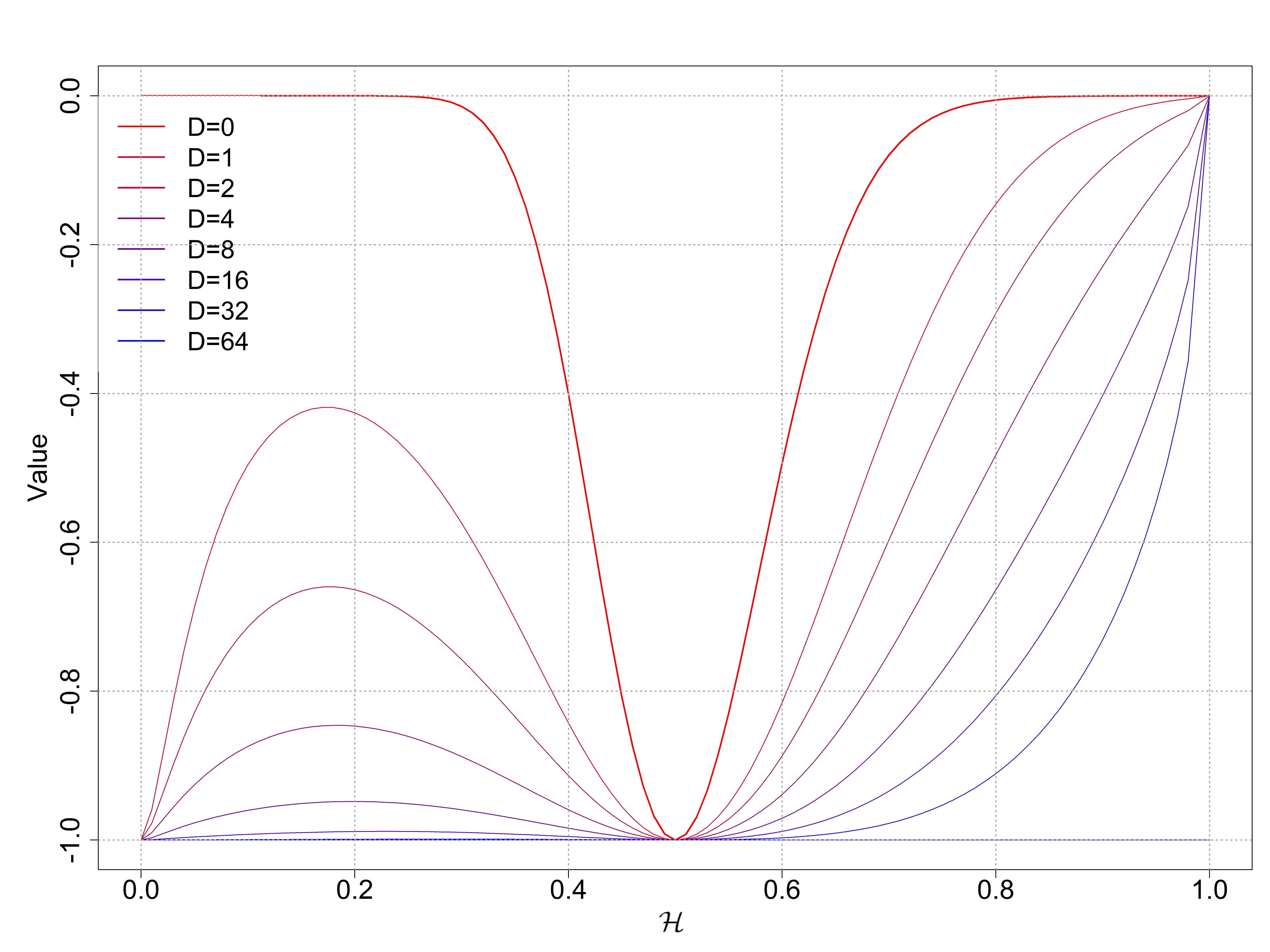} 
	\caption{\footnotesize  The value for different delays $\delay$ (shown in different colors) for $n=64$, $\mu=0$.  
 {\bf Top:} The value is shown as a function of the $\rho$ parameter (x-axis) for the Kac-Murdock-Szeg\"{o} covariance matrix $\Sigma_{ij}:=\rho^{|i-j|}$ from Example \ref{exm.1}. The results for $\delay=0,1$ can be deduced from the closed-form derivations shown in the example. 
{\bf Bottom:} The value is shown as a function of the Hurst index $\hurst$ (x-axis) for a covariance matrix representing the discretized fractional Brownian motion from \eqref{eq:fbm_discrete}.} 
\label{fig:value_vs_H}		
\end{figure}

\section*{Acknowledgments}
We thank the anonymous referees whose comments improved the quality of the paper.

\clearpage
\newpage

\bibliographystyle{plainnat} 
\bibliography{finance}

\end{document}